\documentclass[a4paper,10pt]{article}
\usepackage{amsmath,amssymb,amsfonts,amsthm}
\usepackage[ascii]{inputenc}
\usepackage[margin=4.1cm]{geometry}
\usepackage{bbm}
\usepackage{algorithm}
\usepackage{algorithmic}
\usepackage{graphicx}
\usepackage[pdftex,bookmarks=false,colorlinks=true,linkcolor=blue,
citecolor=blue,filecolor=black,urlcolor=blue]{hyperref}

\providecommand{\ud}{\,\mathrm{d}}
\providecommand{\ii}{\mathbbm{i}}
\providecommand{\vect}[1]{{\boldsymbol{#1}}}
\providecommand{\ket}[1]{\left\lvert#1\right\rangle}
\providecommand{\conj}[1]{\overline{#1}}
\providecommand{\round}[1]{\left[#1\right]}

\newtheorem{definition}{Definition}

\newtheorem{proposition}[definition]{Proposition}
\newtheorem{lemma}[definition]{Lemma}

\begin{document}

\title{\vspace{-2cm}Efficient algorithm for many-electron angular momentum and spin diagonalization \\ on atomic subshells}

\author{Christian B. Mendl\footnote{Zentrum Mathematik, Technische Universit\"at M\"unchen, Boltzmannstra{\ss}e 3, 85748 Garching bei M\"unchen, Germany; \href{mailto:mendl@ma.tum.de}{mendl@ma.tum.de}}}

\date{June 2, 2015}

\maketitle

\begin{abstract}
\footnotesize
We devise an efficient algorithm for the symbolic calculation of irreducible angular momentum and spin (LS) eigenspaces within the $n$-fold antisymmetrized tensor product $\wedge^n V_u$, where $n$ is the number of electrons and $u = \mathrm{s}, \mathrm{p}, \mathrm{d},\dots$ denotes the atomic subshell. This is an essential step for dimension reduction in configuration-interaction (CI) methods applied to atomic many-electron quantum systems. The algorithm relies on the observation that each $L_z$ eigenstate with maximal eigenvalue is also an $\vect{L}^2$ eigenstate (equivalently for $S_z$ and $\vect{S}^2$), as well as the traversal of LS eigenstates using the lowering operators $L_-$ and $S_-$. Iterative application to the remaining states in $\wedge^n V_u$ leads to an \emph{implicit} simultaneous diagonalization. A detailed complexity analysis for fixed $n$ and increasing subshell number $u$ yields run time $\mathcal{O}(u^{3n-2})$. A symbolic computer algebra implementation is available online.

\smallskip
\noindent \textbf{Keywords.} angular momentum and spin symmetry, atomic many-electron quantum systems, symbolic computation
\end{abstract}

\section{Introduction}
\label{sec:Introduction}

Since the inception of quantum mechanics, it is well-known that the (non-relativistic, Born-Oppenheimer) Hamiltonian governing many-electron atoms leaves the simultaneous eigenspaces of the angular momentum, spin and parity (LS) operators
\begin{equation}
\label{eq:LSOperators}
\vect{L}^2, \ L_z, \ \vect{S}^2, \ S_z, \ \hat{R}
\end{equation}
invariant. From a practical perspective, the restriction to symmetry subspaces can significantly reduce computational costs (see, e.g., Ref.~\cite{FroeseFischer1997, FragaKarwowskiSaxena1976, Taylor1986, ChaichianHagedornBook1997}). In particular, such a restriction is an essential ingredient for configuration interaction~(CI) approximation methods in Ref.~\cite{NuclearChargeLimit2009,AsymptoticsCI2009,Chromium2010}. However, simultaneous diagonalization of the operators~\eqref{eq:LSOperators} on the full CI space is encumbered by the inherent ``curse of dimensionality'', which renders ``naive'' $\mathcal{O}(\dim^3)$ approaches infeasible. The present paper outlines an efficient algorithm for computing the symbolic eigenspaces by making use of representation theory and the algebraic properties of the LS operators.

In \eqref{eq:LSOperators}, the total angular momentum operator is defined as $\vect{L} = \sum_{j=1}^n \vect{L}(j)$ with $n$ the number of electrons and
\begin{equation}
\vect{L}(j) = \tfrac{1}{\mathbbm{i}}\, \vect{x}_j \times \vect{\nabla}_j
\end{equation}
the angular momentum operator acting on electron $j$. (We choose units such that $\hbar = 1$.) $L_z$ is the third component of $\vect{L}$. In spherical polar coordinates, $L_z(j) = \tfrac{1}{\mathbbm{i}} \partial/\partial \varphi_j$. Analogously for spin, $\vect{S} = \sum_{j=1}^n \vect{S}(j)$ with $S_{\alpha}(j)$ for $\alpha = x, y, z$ the usual Pauli matrices
\begin{equation}
\sigma_x = \frac{1}{2} \begin{pmatrix}0 & 1 \\ 1 & 0\end{pmatrix}, \quad \sigma_y = \frac{1}{2} \begin{pmatrix}0 & -\ii \\ \ii & 0\end{pmatrix}, \quad \sigma_z = \frac{1}{2} \begin{pmatrix}1 & 0 \\ 0 & -1\end{pmatrix}
\end{equation}
acting on electron $j$. The components of the angular and spin operators obey the well-known commutator relations $[L_{\alpha}, L_{\beta}] = \ii L_{\gamma}$ and $[S_{\alpha}, S_{\beta}] = \ii S_{\gamma}$ with $\alpha, \beta, \gamma$ cyclic permutations of $x, y, z$. The \emph{ladder operators} are given by $L_{\pm} = L_x \pm \ii L_y$ and $S_{\pm} = S_x \pm \ii S_y$. They have the property that for any angular momentum eigenfunction $\psi^{m_\ell}$ with eigenvalue $m_\ell$, $L_{\pm} \psi^{m_\ell}$ is zero or an eigenfunction with eigenvalue $m_\ell \pm 1$, and correspondingly for spin. The \emph{parity operator} acts on wavefunctions as $\hat{R}\,\psi(\vect{x}_1, s_1, \dots, \vect{x}_n, s_n) = \psi(-\vect{x}_1, s_1, \dots, -\vect{x}_n, s_n)$, where $\vect{x}_j \in \mathbb{R}^3$ and $s_j \in \{-\frac{1}{2}, \frac{1}{2}\}$ are the position and spin coordinate of electron $j$.

The simultaneous diagonalization of the LS operators is greatly simplified by representation theory using Clebsch-Gordan coefficients. Specifically, the required computational cost is reduced to the calculation of irreducible LS representation spaces (i.e., diagonalizing the operators \eqref{eq:LSOperators}) on the $n$-fold antisymmetrized tensor product $\wedge^n V_u$ (compare with Ref.~\cite[proposition~2]{Chromium2010}). Here, $V_u$ denotes an angular momentum subshell, $u = \mathrm{s}, \mathrm{p}, \mathrm{d}, \mathrm{f},\dots$ in chemist's notation. An explicit realization of $V_u$ is
\begin{equation}
V_u = \mathrm{span}\left\{Y_{u,m}\!\uparrow,Y_{u,m}\!\downarrow\right\}_{m=u, u-1, \dots, -u}
\end{equation}
with the spherical harmonics $Y_{u,m}$:
\begin{equation*}
\begin{split}
Y_{\mathrm{s},0} &= \tfrac{1}{\sqrt{4 \pi}}, \\
Y_{\mathrm{p},1} = -\tfrac{1}{2} \sqrt{\tfrac{3}{2 \pi}} \sin(\theta) \mathrm{e}^{\ii \varphi}, \quad Y_{\mathrm{p},0} &= \tfrac{1}{2} \sqrt{\tfrac{3}{\pi}} \cos(\theta), \quad Y_{\mathrm{p},-1} = \tfrac{1}{2} \sqrt{\tfrac{3}{2 \pi}} \sin(\theta) \mathrm{e}^{-\ii \varphi }\\
&\dots
\end{split}
\end{equation*}
We identify the subshell label $u$ with the corresponding quantum number, i.e., $\mathrm{s},\mathrm{p},\mathrm{d},\mathrm{f},\dots{}\leftrightarrow 0,1,2,3,\dots$ In particular, $\dim(V_u) = 2\,(2\,u + 1)$. Note that $Y_{u,m}\!\uparrow$, $Y_{u,m}\!\downarrow$ are simultaneous single-particle $L_z$-$S_z$ eigenstates. They serve as underlying ordered orbitals, which we denote abstractly as
\begin{align*}
&\left(\mathrm{s}, \conj{\mathrm{s}}\right) &\text{for } & V_{\mathrm{s}},\\
&\left(\mathrm{p}_1, \conj{\mathrm{p}_1}, \mathrm{p}_0, \conj{\mathrm{p}_0}, \mathrm{p}_{\text{-}1}, \conj{\mathrm{p}_{\text{-}1}}\right) &\text{for } & V_{\mathrm{p}},\\
&\left(\mathrm{d}_2, \conj{\mathrm{d}_2}, \mathrm{d}_1, \conj{\mathrm{d}_1},\dots, \mathrm{d}_{\text{-}2}, \conj{\mathrm{d}_{\text{-}2}}\right) &\text{for } & V_{\mathrm{d}},\\
&\left(\mathrm{f}_3, \conj{\mathrm{f}_3}, \mathrm{f}_2, \conj{\mathrm{f}_2},\dots, \mathrm{f}_{\text{-}3}, \conj{\mathrm{f}_{\text{-}3}}\right) &\text{for } & V_{\mathrm{f}},\\
&\qquad\dots
\end{align*}
The highest $L_z$ quantum number appears first, and $\conj{\,\cdot\,}$ equals spin down $\downarrow$, following the convention in Ref.~\cite{NuclearChargeLimit2009}. The elements of $\wedge^n V_u$ are then linear combinations of Slater determinants built from these orbitals, for example $\frac{1}{\sqrt{2}} \lvert \mathrm{d}_2 \conj{\mathrm{d}_1} \mathrm{d}_{\text{-}1}\rangle - \frac{\ii}{\sqrt{2}} \lvert\mathrm{d}_1 \mathrm{d}_0 \conj{\mathrm{d}_0}\rangle \in \wedge^3 V_{\mathrm{d}}$.

The simultaneous diagonalization may now be formalized as follows. For a given $n \in \{1,2,\dots,\dim(V_u)\}$, we need to decompose the $n$-particle space $\wedge^n V_u$ into irreducible LS representation spaces $V_{u,n,i}$,
\begin{equation}
\label{eq:IrredDecompose}
\wedge^n V_u = \bigoplus_i V_{u,n,i}
\end{equation}
such that
\begin{equation}
\label{eq:IrredSubspace}
\begin{split}
\vect{L}^2\,\varphi &= \ell_i (\ell_i+1)\,\varphi, \quad L_\pm\,\varphi \in V_{u,n,i},\\
\vect{S}^2\,\varphi &= s_i (s_i+1)\,\varphi, \quad S_\pm\,\varphi \in V_{u,n,i} \quad \text{for all}\ \varphi \in V_{u,n,i},\\
\dim(V_{u,n,i}) &= (2\ell_i+1) (2s_i+1).
\end{split}
\end{equation}

\begin{figure}[!ht]
\centering
\includegraphics[width=0.6\textwidth]{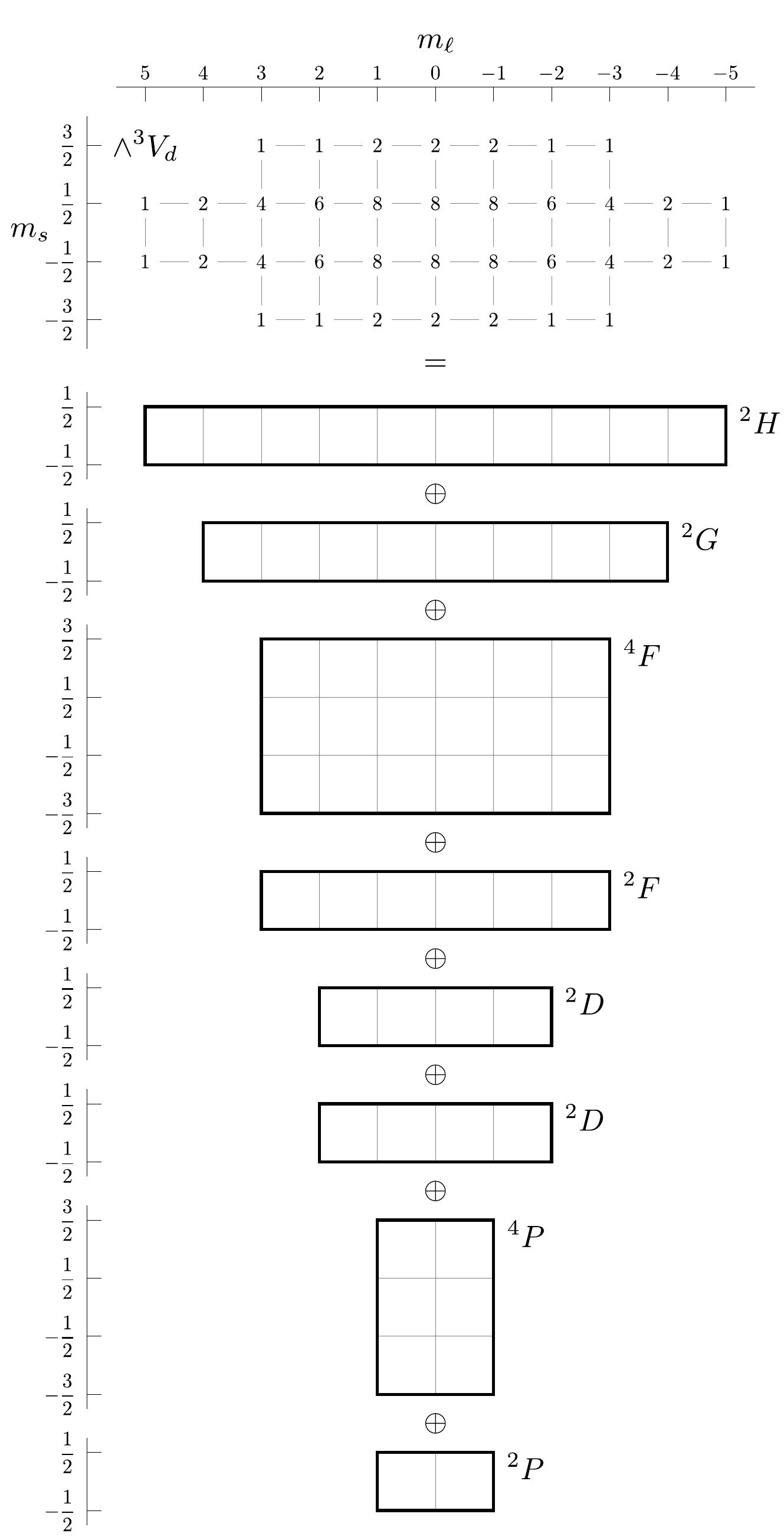}
\caption{Algebraic decomposition of $\wedge^3 V_{\mathrm{d}}$ into irreducible LS representation spaces $V_{\mathrm{d},3,i}$ (see equation~\eqref{eq:IrredDecompose}). Each of these spaces corresponds to a rectangle, matching the $m_\ell$ and $m_s$ quantum numbers running from $\ell_i,\dots,-\ell_i$ and $s_i,\dots,-s_i$, respectively. The $(\ell_i,s_i)$ quantum numbers are displayed in common chemist's notation as $^{2s+1}\ell$. Superimposing all rectangles yields the multiplicities of the $L_z$-$S_z$ eigenvalues in the table at the top.}
\label{fig:quantLS}
\end{figure}

The proposed algorithm (see section~\ref{sec:Algorithm}) performs the LS diagonalization implicitly, relies on the sparse matrix structure of the lowering operators $L_-$, $S_-$, and makes use of the algebraic structure of $\wedge^n V_u$ as illustrated in figure~\ref{fig:quantLS}. We present explicit tables containing decompositions of selected $\wedge^n V_u$ in section~\ref{sec:Results}. Given $u$, the number of electrons maximizing $\dim(\wedge^n V_u) = \binom{\dim(V_u)}{n}$ equals $n = 2\,u+1$ since $\dim(V_u) = 2\,(2\,u + 1)$. Due to this exponential growth in $u$, solving Eq.~\eqref{eq:IrredDecompose} for all possible $n$ restricts $u$ to the $\mathrm{s}$, $\mathrm{p}$ and $\mathrm{d}$ subshells at present, and $u = \mathrm{f}$ for all $n = 1, \dots, 14$ might still be attainable. On the other hand, keeping $n$ fixed means that $\dim(\wedge^n V_u) = \mathcal{O}(u^n)$ asymptotically in $u$. For a given $n$ the algorithm has run time
\begin{equation}
R_n(u) = \mathcal{O}\big(u^{3n-2}\big),
\end{equation}
as derived in section~\ref{sec:RunTime}. In particular, for $n = 2$, this equals $\mathcal{O}(\dim(\wedge^n V_u)^2)$ (instead of $\mathcal{O}(\dim(\wedge^n V_u)^3)$ for the usual diagonalization of a dense matrix).

As an alternative scenario, consider the case that we are only interested in representation spaces $V_{u,n,i}$ with $\ell_i$ and $s_i$ equal (or close to) zero. As our analysis will show, this opens up the possibility of explicitly diagonalizing~\eqref{eq:LSOperators} restricted to the ``central'' simultaneous $L_z$-$S_z$ eigenspace with eigenvalues $(0,0)$ for $n$ even and $(0,\frac{1}{2})$ for $n$ odd, respectively. Due to symmetry, this eigenspace also has the highest dimension (denoted $d_{u,n}$) among all simultaneous $L_z$-$S_z$ eigenspaces within $\wedge^n V_u$. In section~\ref{sec:d_un}, we derive the asymptotic result
\begin{equation}
\label{eq:d_un}
d_{u,n} \cong \sqrt3\ \frac{\dim\left(\wedge^n V_u\right)}{\pi\,n\,u} = \mathcal{O}\big(u^{n-1}\big) \quad \text{as } u \to \infty, \text{ for fixed } n.
\end{equation}
Thus, diagonalization restricted to this central eigenspace still requires $\mathcal{O}(d_{u,n}^3) = \mathcal{O}(u^{3n-3})$ operations.

\section{Algorithm}
\label{sec:Algorithm}

The reasoning and basic ingredients of our algorithm are as follows:
\begin{enumerate}
\item Observe that the canonical Slater determinant basis vectors of $\wedge^n V_u$ are precisely the eigenvectors of both $L_z$ and $S_z$ acting on $\wedge^n V_u$. For example, $L_z\,\ket{\mathrm{d}_2 \conj{\mathrm{d}_1} \mathrm{d}_{\text{-}1}} = (2 + 1 - 1)\ket{\mathrm{d}_2 \conj{\mathrm{d}_1} \mathrm{d}_{\text{-}1}}$ and $S_z\,\ket{\mathrm{d}_2 \conj{\mathrm{d}_1} \mathrm{d}_{\text{-}1}} = (\frac{1}{2} - \frac{1}{2} + \frac{1}{2})\ket{\mathrm{d}_2 \conj{\mathrm{d}_1} \mathrm{d}_{\text{-}1}}$. In particular, all simultaneous $L_z$-$S_z$ eigen\emph{values} can easily be enumerated, including multiplicities.

\item Let $\ell_{\max}$ be the largest $L_z$ eigenvalue on $\wedge^n V_u$ and $W_{L_z,\max}$ the corresponding eigenspace, as well as $\psi \in W_{L_z,\max} \setminus\{0\}$. Then $\psi$ must also be an $\vect{L}^2$ eigenvector with eigenvalue $\ell_{\max}(\ell_{\max}+1)$. This follows from the identity
\begin{equation}
\vect{L}^2 = L_z (L_z + \boldsymbol{1}) + L_{-} L_{+}
\end{equation}
and the fact that $L_+$ is zero on $W_{L_z,\max}$ since $\ell_{\max}$ is -- by definition -- the largest $L_z$ eigenvalue. The same reasoning applies to $S_z$ and $\vect{S}^2$ restricted to $W_{L_z,\max}$. Thus we may assume that $\psi$ is also a $S_z$-$\vect{S}^2$ eigenvector with eigenvalue $s$ and $s(s+1)$, respectively.

\item Starting from $\psi$, we may span an irreducible LS representation space $V_\psi$ by repeatedly applying the lowering operators $L_-$ and $S_-$. That is, $V_\psi := \mathrm{span}\{\psi, L_- \psi, S_- \psi, L_- S_- \psi, \dots\}$.

\item We obtain all remaining irreducible representation spaces by iteratively applying steps 2 and 3 to the orthogonal complement of $V_\psi$ in $\wedge^n V_u$.
\end{enumerate}

Note that although the underlying Hilbert space is complex, all steps involve real-valued matrix representations of the operators $L_z, S_z, L_\pm, S_\pm$ only. Thus, the whole algorithm can be implemented on the real numbers.

\begin{algorithm}
\caption{Quantum numbers of all irreducible subspaces in $\wedge^n V_u$}
\label{alg:quantLS}
\begin{algorithmic}[1]
\STATE Enumerate the simultaneous eigenvalues of $L_z$ and $S_z$ acting on $\wedge^n V_u$, including multiplicities, and store them in a table denoted $T_z$. For example, figure~\ref{fig:quantLS} shows the multiplicity table for $\wedge^3 V_{\mathrm{d}}$.
\STATE $i \gets 1$
\WHILE{$T_z$ contains non-zero multiplicities}
	\STATE Let $\ell := \ell_{\max}$ be the greatest $L_z$ eigenvalue in $T_z$ with non-zero multiplicity, and let $s$ be a corresponding $S_z$ eigenvalue which is maximal among all tuples $(\ell_{\max},s)$ in $T_z$.
	\STATE Calculate the $m_\ell$ and $m_s$ quantum numbers corresponding to $(\ell,s)$, i.e., the tuples $(m_\ell,m_s)$ for all $m_\ell = \ell,\dots,-\ell$ and $m_s = s,\dots,-s$. Decrement the multiplicity of each $(m_\ell,m_s)$ in $T_z$ by one.
	\STATE $(\ell_i,s_i) \gets (\ell,s)$ (store the current quantum numbers), and increment $i$.
\ENDWHILE
\end{algorithmic}
\end{algorithm}

The $L_z$-$S_z$ quantum numbers (including multiplicities) are sufficient to calculate the $(\ell_i,s_i)$ quantum numbers in Eq.~\eqref{eq:IrredSubspace}, see algorithm~\ref{alg:quantLS}. Since each irreducible LS space contains exactly one vector in the ``central'' simultaneous $L_z$-$S_z$ eigenspace with eigenvalues $(0,0)$ ($n$ even) or $(0,\frac{1}{2})$ ($n$ odd) and multiplicity $d_{u,n}$, there are exactly $d_{u,n}$ irreducible LS spaces.

Algorithm~\ref{alg:IrredLS} actually performs the simultaneous diagonalization. It requires the $(\ell_i,s_i)$ tuples computed by algorithm~\ref{alg:quantLS}.

\begin{algorithm}
\caption{Simultaneous diagonalization of the operators~\eqref{eq:LSOperators} on $\wedge^n V_u$, yielding the decomposition~\eqref{eq:IrredDecompose}}
\label{alg:IrredLS}
\begin{algorithmic}[1]
\REQUIRE Irreducible representation space quantum numbers $(\ell_i,s_i)$ as computed by algorithm~\ref{alg:quantLS}.
\STATE Partition the canonical Slater determinant basis of $\wedge^n V_u$ into simultaneous $L_z$-$S_z$ eigenspaces denoted $W_{m_\ell,m_s}$. That is, $W_{m_\ell,m_s}$ is the eigenspace corresponding to eigenvalues $m_\ell$ and $m_s$, respectively.
\FOR{$i = 1,2,\dots$}
\STATE
\label{algline:LoweringSpan}
Select a (normalized) $\psi_i \in W_{\ell_i,s_i}$ and span the corresponding irreducible representation space $V_{u,n,i}$ in~\eqref{eq:IrredDecompose} by repeatedly applying the lowering operators $L_-$ and $S_-$. That is,
\begin{align*}
&V_{u,n,i} := \mathrm{span}\left\{\psi_i^{m_\ell,m_s}\right\}_{m_\ell=\ell_i,\dots,-\ell_i,m_s=s_i,\dots,-s_i} \quad \text{with}\\
&\psi_i^{\ell_i,s_i} := \psi_i \quad \text{and}\\
&\psi_i^{m_\ell-1,m_s} := c_{\ell_i,m_\ell} L_-\,\psi_i^{m_\ell,m_s},\\
&\psi_i^{m_\ell,m_s-1} := c_{s_i,m_s} S_-\,\psi_i^{m_\ell,m_s}
\end{align*}
and the normalization factors $c_{\ell,m} := (\ell(\ell+1)-m(m-1))^{-1/2}$.
\STATE Remove the vectors spanning $V_{u,n,i}$ from any corresponding $L_z$-$S_z$ eigenspace $W_{\ell_j,s_j}$ with $\ell_j \le \ell_i$ and $s_j \le s_i$. More precisely, update $W_{\ell_j,s_j}$ such that it contains the orthogonal complement of $\psi_i^{\ell_j,s_j}$ in $W_{\ell_j,s_j}$. \label{algline:OrthComplement}
\ENDFOR
\end{algorithmic}
\end{algorithm}

The basis vectors spanning the orthogonal complement in $W_{\ell_j,s_j}$ (line~\ref{algline:OrthComplement}) are not unique. This poses a practical problem for symbolic computer algebra implementations. Namely, orthonormalizing these basis vectors can lead to a blow-up of nested squares, which is particularly unfavorable since subsequently the lowering operators (line~\ref{algline:LoweringSpan}) are applied to these vectors. To circumvent this difficulty, one can instead work with the unique projection matrix $P_j$ acting on the basis vectors initially in $W_{\ell_j,s_j}$. Then, in line~\ref{algline:OrthComplement}, $P_j$ is updated such that it spans precisely the orthogonal complement:
\begin{equation}
P_j \gets P_j - \big\lvert \psi_i^{\ell_j,s_j}\big\rangle \big\langle \psi_i^{\ell_j,s_j} \big\rvert.
\end{equation}
At the beginning of the algorithm, each $P_j$ starts as identity matrix (on $W_{\ell_j,s_j}$), and ends as zero matrix.

\section{Example decompositions}
\label{sec:Results}

Explicit decompositions of $\wedge^n V_{\mathrm{f}}$ for $n = 1,2,3$ are shown in table~\ref{tab:irredLSf}. We have omitted $\wedge^n V_u$, $u = \mathrm{s}, \mathrm{p}, \mathrm{d}$ since these are already published in~\cite{Chromium2010}. The complete tables are available online, including a \textsf{Mathematica} implementation of the algorithm \cite{irredLSGithub} which makes use of the \textsf{FermiFab} toolbox \cite{FermiFabPaper2011, FermiFabSoftware}. For conciseness, only states with maximal $L_z$ and $S_z$ quantum numbers are displayed; applying the lowering operators $L_{-}$ and $S_{-}$ yields the remaining wavefunctions. Note that in general, symmetry levels can appear more than once within a many-particle subshell, e.g., $^2\mathrm{G}^{\mathrm{o}}$ in $\wedge^3 V_\mathrm{f}$. Thus, the tables are only unique up to (orthogonal) base changes of the states within the same symmetry level. The run time on a commodity laptop computer to calculate the symbolic eigenspaces is approximately 16 seconds for $u = \mathrm{f}$ and $n = 3$, and 550 seconds for $u = \mathrm{f}$ and $n = 4$.

\begin{table}[!htp]
\centering
\tiny
\begin{tabular}{|c|c|cc|c|}
\hline
config&sym&$L_z$&$\mathrm{S}_z$&$\Psi$ $\vphantom{\biggl(}$\\
\hline\hline
$\wedge^{1}V_{\mathrm{f}}$&$^2\mathrm{F}^{\mathrm{o}}$&$3$&$\frac{1}{2}$&$\ket{\mathrm{f}_3}$ $\vphantom{\biggl(}$\\
\hline
$\wedge^{2}V_{\mathrm{f}}$&$^1\mathrm{I}$&$6$&$0$&$\ket{\mathrm{f}_3 \conj{\mathrm{f}_3}}$ $\vphantom{\biggl(}$\\
\cline{2-5}
&$^3\mathrm{H}$&$5$&$1$&$\ket{\mathrm{f}_3 \mathrm{f}_2}$ $\vphantom{\biggl(}$\\
\cline{2-5}
&$^1\mathrm{G}$&$4$&$0$&$\frac{1}{\sqrt{11}}\left(-\sqrt{3}\cdot\ket{\mathrm{f}_3 \conj{\mathrm{f}_1}}+\sqrt{3}\cdot\ket{\conj{\mathrm{f}_3} \mathrm{f}_1}+\sqrt{5}\cdot\ket{\mathrm{f}_2 \conj{\mathrm{f}_2}}\right)$ $\vphantom{\biggl(}$\\
\cline{2-5}
&$^3\mathrm{F}$&$3$&$1$&$\frac{1}{\sqrt{3}}\left(-\ket{\mathrm{f}_3 \mathrm{f}_0}+\sqrt{2}\cdot\ket{\mathrm{f}_2 \mathrm{f}_1}\right)$ $\vphantom{\biggl(}$\\
\cline{2-5}
&$^1\mathrm{D}$&$2$&$0$&$\frac{1}{\sqrt{42}}\left(\sqrt{5}\cdot\ket{\mathrm{f}_3 \conj{\mathrm{f}_{\text{-}1}}}-\sqrt{5}\cdot\ket{\conj{\mathrm{f}_3} \mathrm{f}_{\text{-}1}}-\sqrt{10}\cdot\ket{\mathrm{f}_2 \conj{\mathrm{f}_0}}+\sqrt{10}\cdot\ket{\conj{\mathrm{f}_2} \mathrm{f}_0}+2 \sqrt{3}\cdot\ket{\mathrm{f}_1 \conj{\mathrm{f}_1}}\right)$ $\vphantom{\biggl(}$\\
\cline{2-5}
&$^3\mathrm{P}$&$1$&$1$&$\frac{1}{\sqrt{14}}\left(\sqrt{3}\cdot\ket{\mathrm{f}_3 \mathrm{f}_{\text{-}2}}-\sqrt{5}\cdot\ket{\mathrm{f}_2 \mathrm{f}_{\text{-}1}}+\sqrt{6}\cdot\ket{\mathrm{f}_1 \mathrm{f}_0}\right)$ $\vphantom{\biggl(}$\\
\cline{2-5}
&$^1\mathrm{S}$&$0$&$0$&$\frac{1}{\sqrt{7}}\left(-\ket{\mathrm{f}_3 \conj{\mathrm{f}_{\text{-}3}}}+\ket{\conj{\mathrm{f}_3} \mathrm{f}_{\text{-}3}}+\ket{\mathrm{f}_2 \conj{\mathrm{f}_{\text{-}2}}}-\ket{\conj{\mathrm{f}_2} \mathrm{f}_{\text{-}2}}-\ket{\mathrm{f}_1 \conj{\mathrm{f}_{\text{-}1}}}+\ket{\conj{\mathrm{f}_1} \mathrm{f}_{\text{-}1}}+\ket{\mathrm{f}_0 \conj{\mathrm{f}_0}}\right)$ $\vphantom{\biggl(}$\\
\hline
$\wedge^{3}V_{\mathrm{f}}$&$^2\mathrm{K}^{\mathrm{o}}$&$8$&$\frac{1}{2}$&$\ket{\mathrm{f}_3 \conj{\mathrm{f}_3} \mathrm{f}_2}$ $\vphantom{\biggl(}$\\
\cline{2-5}
&$^2\mathrm{J}^{\mathrm{o}}$&$7$&$\frac{1}{2}$&$\frac{1}{2 \sqrt{2}}\left(\sqrt{3}\cdot\ket{\mathrm{f}_3 \conj{\mathrm{f}_3} \mathrm{f}_1}+\sqrt{5}\cdot\ket{\mathrm{f}_3 \mathrm{f}_2 \conj{\mathrm{f}_2}}\right)$ $\vphantom{\biggl(}$\\
\cline{2-5}
&$^4\mathrm{I}^{\mathrm{o}}$&$6$&$\frac{3}{2}$&$\ket{\mathrm{f}_3 \mathrm{f}_2 \mathrm{f}_1}$ $\vphantom{\biggl(}$\\
\cline{2-5}
&$^2\mathrm{I}^{\mathrm{o}}$&$6$&$\frac{1}{2}$&$\frac{1}{\sqrt{21}}\left(3\cdot\ket{\mathrm{f}_3 \conj{\mathrm{f}_3} \mathrm{f}_0}-\sqrt{2}\cdot\ket{\mathrm{f}_3 \mathrm{f}_2 \conj{\mathrm{f}_1}}-\sqrt{2}\cdot\ket{\mathrm{f}_3 \conj{\mathrm{f}_2} \mathrm{f}_1}+2 \sqrt{2}\cdot\ket{\conj{\mathrm{f}_3} \mathrm{f}_2 \mathrm{f}_1}\right)$ $\vphantom{\biggl(}$\\
\cline{2-5}
&$^2\mathrm{H}^{\mathrm{o}}$&$5$&$\frac{1}{2}$&$\frac{1}{\sqrt{6}}\left(\sqrt{2}\cdot\ket{\mathrm{f}_3 \conj{\mathrm{f}_3} \mathrm{f}_{\text{-}1}}-\ket{\mathrm{f}_3 \conj{\mathrm{f}_2} \mathrm{f}_0}+\ket{\conj{\mathrm{f}_3} \mathrm{f}_2 \mathrm{f}_0}+\sqrt{2}\cdot\ket{\mathrm{f}_2 \conj{\mathrm{f}_2} \mathrm{f}_1}\right)$ $\vphantom{\biggl(}$\\
\cline{2-5}
&$^2\mathrm{H}^{\mathrm{o}}$&$5$&$\frac{1}{2}$&$\frac{1}{\sqrt{273}}\left(-\sqrt{5}\cdot\ket{\mathrm{f}_3 \conj{\mathrm{f}_3} \mathrm{f}_{\text{-}1}}-3 \sqrt{10}\cdot\ket{\mathrm{f}_3 \mathrm{f}_2 \conj{\mathrm{f}_0}}+2 \sqrt{10}\cdot\ket{\mathrm{f}_3 \conj{\mathrm{f}_2} \mathrm{f}_0}+6 \sqrt{3}\cdot\ket{\mathrm{f}_3 \mathrm{f}_1 \conj{\mathrm{f}_1}}\right.$ $\vphantom{\biggl(}$\\
&&&&$\left.+\sqrt{10}\cdot\ket{\conj{\mathrm{f}_3} \mathrm{f}_2 \mathrm{f}_0}+2 \sqrt{5}\cdot\ket{\mathrm{f}_2 \conj{\mathrm{f}_2} \mathrm{f}_1}\right)$ $\vphantom{\biggl(}$\\
\cline{2-5}
&$^4\mathrm{G}^{\mathrm{o}}$&$4$&$\frac{3}{2}$&$\frac{1}{\sqrt{11}}\left(-\sqrt{5}\cdot\ket{\mathrm{f}_3 \mathrm{f}_2 \mathrm{f}_{\text{-}1}}+\sqrt{6}\cdot\ket{\mathrm{f}_3 \mathrm{f}_1 \mathrm{f}_0}\right)$ $\vphantom{\biggl(}$\\
\cline{2-5}
&$^2\mathrm{G}^{\mathrm{o}}$&$4$&$\frac{1}{2}$&$\frac{1}{7 \sqrt{5}}\left(5 \sqrt{3}\cdot\ket{\mathrm{f}_3 \conj{\mathrm{f}_3} \mathrm{f}_{\text{-}2}}+\sqrt{5}\cdot\ket{\mathrm{f}_3 \mathrm{f}_2 \conj{\mathrm{f}_{\text{-}1}}}-3 \sqrt{5}\cdot\ket{\mathrm{f}_3 \conj{\mathrm{f}_2} \mathrm{f}_{\text{-}1}}-\sqrt{6}\cdot\ket{\mathrm{f}_3 \mathrm{f}_1 \conj{\mathrm{f}_0}}\right.$ $\vphantom{\biggl(}$\\
&&&&$\left.+\sqrt{6}\cdot\ket{\mathrm{f}_3 \conj{\mathrm{f}_1} \mathrm{f}_0}+2 \sqrt{5}\cdot\ket{\conj{\mathrm{f}_3} \mathrm{f}_2 \mathrm{f}_{\text{-}1}}+2 \sqrt{10}\cdot\ket{\mathrm{f}_2 \conj{\mathrm{f}_2} \mathrm{f}_0}+4 \sqrt{3}\cdot\ket{\mathrm{f}_2 \mathrm{f}_1 \conj{\mathrm{f}_1}}\right)$ $\vphantom{\biggl(}$\\
\cline{2-5}
&$^2\mathrm{G}^{\mathrm{o}}$&$4$&$\frac{1}{2}$&$\frac{1}{7 \sqrt{429}}\left(-18 \sqrt{6}\cdot\ket{\mathrm{f}_3 \conj{\mathrm{f}_3} \mathrm{f}_{\text{-}2}}+16 \sqrt{10}\cdot\ket{\mathrm{f}_3 \mathrm{f}_2 \conj{\mathrm{f}_{\text{-}1}}}+\sqrt{10}\cdot\ket{\mathrm{f}_3 \conj{\mathrm{f}_2} \mathrm{f}_{\text{-}1}}-32 \sqrt{3}\cdot\ket{\mathrm{f}_3 \mathrm{f}_1 \conj{\mathrm{f}_0}}\right.$ $\vphantom{\biggl(}$\\
&&&&$\left.-17 \sqrt{3}\cdot\ket{\mathrm{f}_3 \conj{\mathrm{f}_1} \mathrm{f}_0}-17 \sqrt{10}\cdot\ket{\conj{\mathrm{f}_3} \mathrm{f}_2 \mathrm{f}_{\text{-}1}}+49 \sqrt{3}\cdot\ket{\conj{\mathrm{f}_3} \mathrm{f}_1 \mathrm{f}_0}+15 \sqrt{5}\cdot\ket{\mathrm{f}_2 \conj{\mathrm{f}_2} \mathrm{f}_0}\right.$ $\vphantom{\biggl(}$\\
&&&&$\left.+15 \sqrt{6}\cdot\ket{\mathrm{f}_2 \mathrm{f}_1 \conj{\mathrm{f}_1}}\right)$ $\vphantom{\biggl(}$\\
\cline{2-5}
&$^4\mathrm{F}^{\mathrm{o}}$&$3$&$\frac{3}{2}$&$\frac{1}{2}\left(\ket{\mathrm{f}_3 \mathrm{f}_2 \mathrm{f}_{\text{-}2}}-\ket{\mathrm{f}_3 \mathrm{f}_1 \mathrm{f}_{\text{-}1}}+\sqrt{2}\cdot\ket{\mathrm{f}_2 \mathrm{f}_1 \mathrm{f}_0}\right)$ $\vphantom{\biggl(}$\\
\cline{2-5}
&$^2\mathrm{F}^{\mathrm{o}}$&$3$&$\frac{1}{2}$&$\frac{1}{\sqrt{6}}\left(\ket{\mathrm{f}_3 \conj{\mathrm{f}_3} \mathrm{f}_{\text{-}3}}+\ket{\mathrm{f}_3 \mathrm{f}_2 \conj{\mathrm{f}_{\text{-}2}}}-\ket{\mathrm{f}_3 \conj{\mathrm{f}_2} \mathrm{f}_{\text{-}2}}-\ket{\mathrm{f}_3 \mathrm{f}_1 \conj{\mathrm{f}_{\text{-}1}}}\right.$ $\vphantom{\biggl(}$\\
&&&&$\left.+\ket{\mathrm{f}_3 \conj{\mathrm{f}_1} \mathrm{f}_{\text{-}1}}+\ket{\mathrm{f}_3 \mathrm{f}_0 \conj{\mathrm{f}_0}}\right)$ $\vphantom{\biggl(}$\\
\cline{2-5}
&$^2\mathrm{F}^{\mathrm{o}}$&$3$&$\frac{1}{2}$&$\frac{1}{2 \sqrt{33}}\left(7\cdot\ket{\mathrm{f}_3 \conj{\mathrm{f}_3} \mathrm{f}_{\text{-}3}}-3\cdot\ket{\mathrm{f}_3 \mathrm{f}_2 \conj{\mathrm{f}_{\text{-}2}}}-2\cdot\ket{\mathrm{f}_3 \conj{\mathrm{f}_2} \mathrm{f}_{\text{-}2}}+3\cdot\ket{\mathrm{f}_3 \mathrm{f}_1 \conj{\mathrm{f}_{\text{-}1}}}\right.$ $\vphantom{\biggl(}$\\
&&&&$\left.-\ket{\mathrm{f}_3 \conj{\mathrm{f}_1} \mathrm{f}_{\text{-}1}}-2\cdot\ket{\mathrm{f}_3 \mathrm{f}_0 \conj{\mathrm{f}_0}}+5\cdot\ket{\conj{\mathrm{f}_3} \mathrm{f}_2 \mathrm{f}_{\text{-}2}}-2\cdot\ket{\conj{\mathrm{f}_3} \mathrm{f}_1 \mathrm{f}_{\text{-}1}}\right.$ $\vphantom{\biggl(}$\\
&&&&$\left.+\sqrt{15}\cdot\ket{\mathrm{f}_2 \conj{\mathrm{f}_2} \mathrm{f}_{\text{-}1}}-\sqrt{2}\cdot\ket{\mathrm{f}_2 \mathrm{f}_1 \conj{\mathrm{f}_0}}-\sqrt{2}\cdot\ket{\mathrm{f}_2 \conj{\mathrm{f}_1} \mathrm{f}_0}+2 \sqrt{2}\cdot\ket{\conj{\mathrm{f}_2} \mathrm{f}_1 \mathrm{f}_0}\right)$ $\vphantom{\biggl(}$\\
\cline{2-5}
&$^4\mathrm{D}^{\mathrm{o}}$&$2$&$\frac{3}{2}$&$\frac{1}{\sqrt{21}}\left(\sqrt{10}\cdot\ket{\mathrm{f}_3 \mathrm{f}_2 \mathrm{f}_{\text{-}3}}-\sqrt{6}\cdot\ket{\mathrm{f}_3 \mathrm{f}_1 \mathrm{f}_{\text{-}2}}+\sqrt{5}\cdot\ket{\mathrm{f}_3 \mathrm{f}_0 \mathrm{f}_{\text{-}1}}\right)$ $\vphantom{\biggl(}$\\
\cline{2-5}
&$^2\mathrm{D}^{\mathrm{o}}$&$2$&$\frac{1}{2}$&$\frac{1}{2 \sqrt{42}}\left(2 \sqrt{5}\cdot\ket{\mathrm{f}_3 \mathrm{f}_2 \conj{\mathrm{f}_{\text{-}3}}}-\sqrt{5}\cdot\ket{\mathrm{f}_3 \conj{\mathrm{f}_2} \mathrm{f}_{\text{-}3}}-2 \sqrt{3}\cdot\ket{\mathrm{f}_3 \mathrm{f}_1 \conj{\mathrm{f}_{\text{-}2}}}-\sqrt{3}\cdot\ket{\mathrm{f}_3 \conj{\mathrm{f}_1} \mathrm{f}_{\text{-}2}}\right.$ $\vphantom{\biggl(}$\\
&&&&$\left.+\sqrt{10}\cdot\ket{\mathrm{f}_3 \mathrm{f}_0 \conj{\mathrm{f}_{\text{-}1}}}+\sqrt{10}\cdot\ket{\mathrm{f}_3 \conj{\mathrm{f}_0} \mathrm{f}_{\text{-}1}}-\sqrt{5}\cdot\ket{\conj{\mathrm{f}_3} \mathrm{f}_2 \mathrm{f}_{\text{-}3}}+3 \sqrt{3}\cdot\ket{\conj{\mathrm{f}_3} \mathrm{f}_1 \mathrm{f}_{\text{-}2}}\right.$ $\vphantom{\biggl(}$\\
&&&&$\left.-2 \sqrt{10}\cdot\ket{\conj{\mathrm{f}_3} \mathrm{f}_0 \mathrm{f}_{\text{-}1}}+2 \sqrt{5}\cdot\ket{\mathrm{f}_2 \conj{\mathrm{f}_2} \mathrm{f}_{\text{-}2}}-\sqrt{5}\cdot\ket{\mathrm{f}_2 \conj{\mathrm{f}_1} \mathrm{f}_{\text{-}1}}+\sqrt{5}\cdot\ket{\conj{\mathrm{f}_2} \mathrm{f}_1 \mathrm{f}_{\text{-}1}}\right.$ $\vphantom{\biggl(}$\\
&&&&$\left.+\sqrt{6}\cdot\ket{\mathrm{f}_1 \conj{\mathrm{f}_1} \mathrm{f}_0}\right)$ $\vphantom{\biggl(}$\\
\cline{2-5}
&$^2\mathrm{D}^{\mathrm{o}}$&$2$&$\frac{1}{2}$&$\frac{1}{6 \sqrt{154}}\left(-14 \sqrt{5}\cdot\ket{\mathrm{f}_3 \mathrm{f}_2 \conj{\mathrm{f}_{\text{-}3}}}+7 \sqrt{5}\cdot\ket{\mathrm{f}_3 \conj{\mathrm{f}_2} \mathrm{f}_{\text{-}3}}+14 \sqrt{3}\cdot\ket{\mathrm{f}_3 \mathrm{f}_1 \conj{\mathrm{f}_{\text{-}2}}}-13 \sqrt{3}\cdot\ket{\mathrm{f}_3 \conj{\mathrm{f}_1} \mathrm{f}_{\text{-}2}}\right.$ $\vphantom{\biggl(}$\\
&&&&$\left.-\sqrt{10}\cdot\ket{\mathrm{f}_3 \mathrm{f}_0 \conj{\mathrm{f}_{\text{-}1}}}+5 \sqrt{10}\cdot\ket{\mathrm{f}_3 \conj{\mathrm{f}_0} \mathrm{f}_{\text{-}1}}+7 \sqrt{5}\cdot\ket{\conj{\mathrm{f}_3} \mathrm{f}_2 \mathrm{f}_{\text{-}3}}-\sqrt{3}\cdot\ket{\conj{\mathrm{f}_3} \mathrm{f}_1 \mathrm{f}_{\text{-}2}}\right.$ $\vphantom{\biggl(}$\\
&&&&$\left.-4 \sqrt{10}\cdot\ket{\conj{\mathrm{f}_3} \mathrm{f}_0 \mathrm{f}_{\text{-}1}}+6 \sqrt{5}\cdot\ket{\mathrm{f}_2 \conj{\mathrm{f}_2} \mathrm{f}_{\text{-}2}}-12 \sqrt{5}\cdot\ket{\mathrm{f}_2 \mathrm{f}_1 \conj{\mathrm{f}_{\text{-}1}}}+3 \sqrt{5}\cdot\ket{\mathrm{f}_2 \conj{\mathrm{f}_1} \mathrm{f}_{\text{-}1}}\right.$ $\vphantom{\biggl(}$\\
&&&&$\left.+12 \sqrt{5}\cdot\ket{\mathrm{f}_2 \mathrm{f}_0 \conj{\mathrm{f}_0}}+9 \sqrt{5}\cdot\ket{\conj{\mathrm{f}_2} \mathrm{f}_1 \mathrm{f}_{\text{-}1}}+9 \sqrt{6}\cdot\ket{\mathrm{f}_1 \conj{\mathrm{f}_1} \mathrm{f}_0}\right)$ $\vphantom{\biggl(}$\\
\cline{2-5}
&$^2\mathrm{P}^{\mathrm{o}}$&$1$&$\frac{1}{2}$&$\frac{1}{2 \sqrt{21}}\left(\sqrt{6}\cdot\ket{\mathrm{f}_3 \conj{\mathrm{f}_1} \mathrm{f}_{\text{-}3}}+\sqrt{3}\cdot\ket{\mathrm{f}_3 \mathrm{f}_0 \conj{\mathrm{f}_{\text{-}2}}}-2 \sqrt{3}\cdot\ket{\mathrm{f}_3 \conj{\mathrm{f}_0} \mathrm{f}_{\text{-}2}}-\sqrt{10}\cdot\ket{\mathrm{f}_3 \mathrm{f}_{\text{-}1} \conj{\mathrm{f}_{\text{-}1}}}\right.$ $\vphantom{\biggl(}$\\
&&&&$\left.-\sqrt{6}\cdot\ket{\conj{\mathrm{f}_3} \mathrm{f}_1 \mathrm{f}_{\text{-}3}}+\sqrt{3}\cdot\ket{\conj{\mathrm{f}_3} \mathrm{f}_0 \mathrm{f}_{\text{-}2}}-\sqrt{10}\cdot\ket{\mathrm{f}_2 \conj{\mathrm{f}_2} \mathrm{f}_{\text{-}3}}-\sqrt{6}\cdot\ket{\mathrm{f}_2 \mathrm{f}_1 \conj{\mathrm{f}_{\text{-}2}}}\right.$ $\vphantom{\biggl(}$\\
&&&&$\left.+\sqrt{6}\cdot\ket{\mathrm{f}_2 \conj{\mathrm{f}_1} \mathrm{f}_{\text{-}2}}+\sqrt{5}\cdot\ket{\mathrm{f}_2 \mathrm{f}_0 \conj{\mathrm{f}_{\text{-}1}}}-\sqrt{5}\cdot\ket{\conj{\mathrm{f}_2} \mathrm{f}_0 \mathrm{f}_{\text{-}1}}-\sqrt{6}\cdot\ket{\mathrm{f}_1 \conj{\mathrm{f}_1} \mathrm{f}_{\text{-}1}}\right.$ $\vphantom{\biggl(}$\\
&&&&$\left.-\sqrt{6}\cdot\ket{\mathrm{f}_1 \mathrm{f}_0 \conj{\mathrm{f}_0}}\right)$ $\vphantom{\biggl(}$\\
\cline{2-5}
&$^4\mathrm{S}^{\mathrm{o}}$&$0$&$\frac{3}{2}$&$\frac{1}{\sqrt{7}}\left(-\ket{\mathrm{f}_3 \mathrm{f}_0 \mathrm{f}_{\text{-}3}}+\sqrt{2}\cdot\ket{\mathrm{f}_3 \mathrm{f}_{\text{-}1} \mathrm{f}_{\text{-}2}}+\sqrt{2}\cdot\ket{\mathrm{f}_2 \mathrm{f}_1 \mathrm{f}_{\text{-}3}}-\ket{\mathrm{f}_2 \mathrm{f}_0 \mathrm{f}_{\text{-}2}}+\ket{\mathrm{f}_1 \mathrm{f}_0 \mathrm{f}_{\text{-}1}}\right)$ $\vphantom{\biggl(}$\\
\hline
\end{tabular}
\caption{Irreducible LS eigenspace decompositions of $\wedge^n V_{\mathrm{f}}$ for $n = 1,2,3$, see equation~\eqref{eq:IrredDecompose}. For conciseness, the table shows states with maximal $L_z$ and $\mathrm{S}_z$ quantum numbers only.}
\label{tab:irredLSf}
\end{table}

\section{Complexity analysis}
\label{sec:Costs}

This section contains a derivation of Eqs.~\eqref{eq:RunTime} and~\eqref{eq:d_un} in the limit of fixed electron number $n$ and $u \to \infty$.

\begin{figure}[!ht]
\centering
\includegraphics[width=0.8\textwidth]{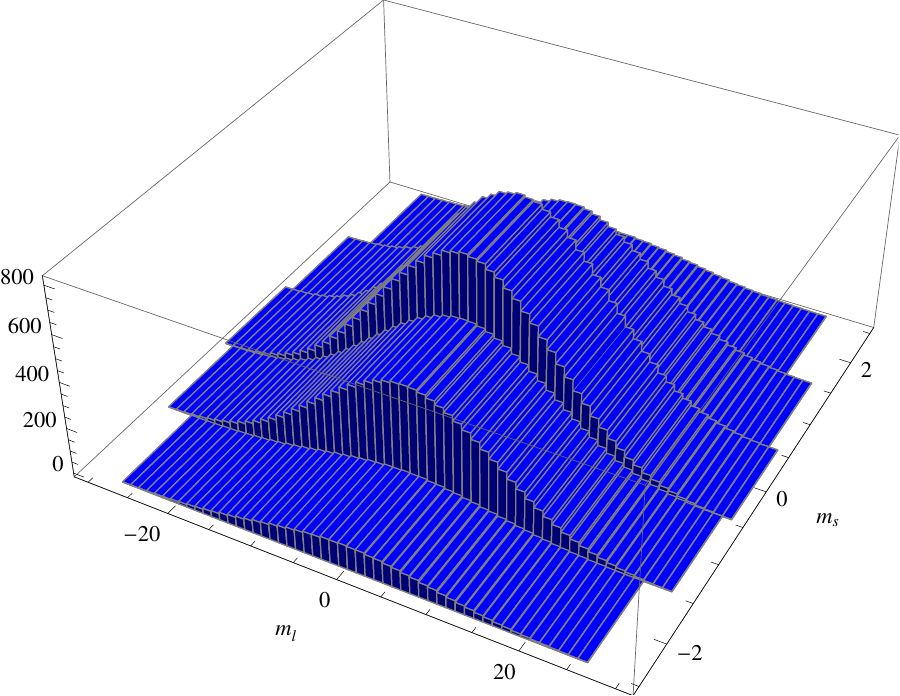}
\caption{Histogram plot of the $L_z$-$S_z$ eigenvalue multiplicities of $\wedge^4 V_8$. This is equivalent to the table in figure~\ref{fig:quantLS} but for $u = 8$ and $n = 4$. The probability density function approaches a normal distribution as a result of the central limit theorem for large $n$; compare with proposition~\ref{prop:LzSzDist}.}
\label{fig:LzSzDist}
\end{figure}

We first investigate the multiplicity distribution of the simultaneous $L_z$-$S_z$ eigenvalues, as illustrated in figure~\ref{fig:LzSzDist}. In the following, $T(m_\ell,m_s)$ denotes the multiplicity of the simultaneous $L_z$-$S_z$ eigenspace with eigenvalues $(m_\ell,m_s)$ on $\wedge^n V_u$. We write $\round{\cdot}$ for the nearest integer function. Furthermore, $f_{\mathrm{IH},n}$ and $f_{\mathrm{bin},n,p}$ denote the probability density functions of the standard Irwin--Hall distribution~\cite{Irwin1927,Hall1927} (sum of $n$ i.i.d.\ $U(0,1)$ random variables) and the binomial distribution with parameters $(n,p)$, respectively.
\begin{proposition}
\label{prop:LzSzDist}
Given a fixed integer $n \ge 1$, define
\begin{equation}
t_{u,n}(x_\ell,m_s) := \frac{u\,T(\round{u\,x_\ell},m_s)}{\dim\left(\wedge^n V_u\right)}, \quad x_\ell \in [-n,n], \quad m_s \in \left\{n/2,\dots,-n/2\right\}.
\end{equation}
Then for each $m_s$,
\begin{equation}
\label{eq:MultiplicityDist}
\lim_{u \to \infty} t_{u,n}(x_\ell,m_s) = f_{L_z}\!(x_\ell) f_{S_z}\!(m_s)
\end{equation}
uniformly in $x_\ell$ with
\begin{equation}
\label{eq:fLzfSz}
f_{L_z}\!(x_\ell) := \frac{1}{2}\,f_{\mathrm{IH},n}\!\left(\frac{x_\ell}{2} + \frac{n}{2}\right), \quad f_{S_z}\!(m_s) := f_{\mathrm{bin},n,\frac{1}{2}}\left(m_s+\frac{n}{2}\right).
\end{equation}
In particular, $f_{L_z}$ and $f_{S_z}$ have zero mean and variances $\sigma_\ell^2 = n/3$ and $\sigma_s^2 = n/4$, respectively.
\end{proposition}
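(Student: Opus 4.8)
The plan is to recast $T(m_\ell,m_s)$ as a probability. A canonical Slater determinant basis vector of $\wedge^n V_u$ is an $n$-element subset of the $2(2u+1)$ orbitals $\{Y_{u,m}\!\uparrow,Y_{u,m}\!\downarrow\}$, and there are $\dim(\wedge^n V_u) = \binom{2(2u+1)}{n}$ of them. Draw one uniformly at random (call the law $\mathbb{P}_u$) and let $K$ be the number of spin-up orbitals it contains and $M_\ell$ the sum of the $m$-labels of its orbitals; since the associated $S_z$ eigenvalue is $K-n/2$ and the $L_z$ eigenvalue is $M_\ell$, we get $T(m_\ell,m_s)/\dim(\wedge^n V_u) = \mathbb{P}_u(M_\ell = m_\ell,\, K = m_s+n/2)$ and hence
\begin{equation*}
t_{u,n}(x_\ell,m_s) = \mathbb{P}_u(K = k)\cdot u\,\mathbb{P}_u\!\big(M_\ell = \round{u\,x_\ell}\mid K = k\big), \qquad k := m_s + \tfrac{n}{2} \in \{0,\dots,n\}.
\end{equation*}
It thus suffices to prove (i) $\mathbb{P}_u(K=k) \to \binom{n}{k}2^{-n} = f_{S_z}(m_s)$, and (ii) $u\,\mathbb{P}_u(M_\ell = \round{u x_\ell}\mid K=k) \to f_{L_z}(x_\ell)$ uniformly in $x_\ell$, with the limit in (ii) \emph{not depending on $k$}---this last point is exactly what makes the limiting expression factor.

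Claim (i) is the standard hypergeometric-to-binomial limit, since $\mathbb{P}_u(K=k) = \binom{2u+1}{k}\binom{2u+1}{n-k}/\binom{2(2u+1)}{n}$ with the population size tending to infinity while $n$ is fixed. For (ii), observe that conditionally on $K=k$ the random basis vector is determined by an independent pair $(A,B)$, where $A$ is a uniform $k$-subset and $B$ a uniform $(n-k)$-subset of $\{-u,\dots,u\}$, and $M_\ell = \Sigma(A)+\Sigma(B)$ with $\Sigma$ the sum of elements. The crucial simplification is that sampling a $j$-subset without replacement differs from sampling $j$ i.i.d.\ uniform integers in $\{-u,\dots,u\}$ only through collisions, an event of probability $O(1/u)$; combining this with the elementary bounds $\max_s\mathbb{P}(\Sigma(\cdot)=s)=O(1/u)$ for a uniform $j$-subset with $j\ge1$ (fix $j-1$ of the elements), and $\max_s\mathbb{P}(\Sigma'_j=s\mid\text{collision})=O(1/u)$ for $j\ge2$, yields after one convolution the pointwise estimate
\begin{equation*}
\big|\mathbb{P}_u(M_\ell = s\mid K=k) - \mathbb{P}(\Sigma'_n = s)\big| = O(1/u^2) \quad\text{uniformly in } s,
\end{equation*}
where $\Sigma'_n$ is the sum of $n$ i.i.d.\ uniform integers in $\{-u,\dots,u\}$ (using $\Sigma'_k + \Sigma'_{n-k} \stackrel{d}{=} \Sigma'_n$). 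After multiplication by $u$ this discrepancy is $O(1/u)\to0$, so (ii) reduces to a local limit theorem for $\Sigma'_n$, with the $k$-dependence now gone.

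For that local limit theorem I would use characteristic functions. The characteristic function of the uniform law on $\{-u,\dots,u\}$ is the normalised Dirichlet kernel $D_u(t) = \frac{\sin((2u+1)t/2)}{(2u+1)\sin(t/2)}$, so Fourier inversion on the integer-valued variable $\Sigma'_n$ together with the substitution $t = \tau/u$ gives
\begin{equation*}
u\,\mathbb{P}(\Sigma'_n = \round{u\,y}) = \frac{1}{2\pi}\int_{-\pi u}^{\pi u} \mathrm{e}^{-\ii\,\tau\round{u y}/u}\, D_u(\tau/u)^n \ud\tau.
\end{equation*}
As $u\to\infty$, $D_u(\tau/u)\to\sin\tau/\tau$ pointwise, and the elementary estimate $|D_u(\tau/u)| \le \tfrac{\pi}{2}\min\{1,1/|\tau|\}$ (valid on the entire range of integration, via $\sin x \ge \tfrac{2}{\pi}x$ on $[0,\tfrac{\pi}{2}]$) provides a $u$-independent integrable majorant as soon as $n\ge2$; dominated convergence then yields $u\,\mathbb{P}(\Sigma'_n = \round{uy}) \to \frac{1}{2\pi}\int_{\mathbb{R}}\mathrm{e}^{-\ii\tau y}(\sin\tau/\tau)^n\ud\tau$ uniformly in $y$, and the right-hand side is exactly the density of a sum of $n$ i.i.d.\ uniform$[-1,1]$ variables. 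Writing such a variable as $2V-1$ with $V\sim U(0,1)$ identifies this density with $\tfrac12 f_{\mathrm{IH},n}(y/2+n/2) = f_{L_z}(y)$, which is indeed independent of $k$. Combined with (i) this proves \eqref{eq:MultiplicityDist} with the densities \eqref{eq:fLzfSz}; the zero means follow from the symmetry $x_\ell\mapsto-x_\ell$, $m_s\mapsto-m_s$, and the variances from $\sigma_\ell^2 = n\cdot\mathrm{Var}(U(-1,1)) = n/3$ and $\sigma_s^2 = \mathrm{Var}(\mathrm{Bin}(n,\tfrac12)) = n/4$.

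I expect the main obstacle to be the local limit theorem of the last paragraph---specifically, securing a tail bound on $D_u(\tau/u)^n$ uniform in $u$ over the growing interval $[-\pi u,\pi u]$, and noting that when $n=1$ the limit density $f_{\mathrm{IH},1}=\mathbbm{1}_{[0,1]}$ is discontinuous, so ``uniformly in $x_\ell$'' must be understood as uniform on sets bounded away from $x_\ell=\pm n$ (equivalently, $n=1$ is checked by hand). A secondary subtlety, already visible above, is that the crude $O(1/u)$ total-variation bound for the without-replacement correction is, by a hair, too weak after multiplication by $u$, which is precisely why the pointwise $O(1/u^2)$ refinement is needed.
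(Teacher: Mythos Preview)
Your argument is correct and follows the same probabilistic strategy as the paper: interpret $T(m_\ell,m_s)/\dim(\wedge^n V_u)$ as the law of the $(L_z,S_z)$-eigenvalues of a uniformly random Slater determinant, then pass from sampling without replacement to i.i.d.\ sampling as $u\to\infty$. The paper organizes the coupling slightly differently---it replaces the antisymmetrized $\psi$ directly by an i.i.d.\ tensor $\tilde\psi$ on the full orbital space $Y_u$, so that independence of $\mathrm{eig}_{L_z}(\tilde\psi)$ and $\mathrm{eig}_{S_z}(\tilde\psi)$ is immediate, whereas you condition on the spin count $K$ first and then show the conditional $L_z$-law is asymptotically $k$-independent. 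Both routes arrive at the same factorization.

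Where your write-up goes well beyond the paper is in rigor. The paper disposes of the exclusion-principle error in one sentence (``the error incurred by ignoring the exclusion principle goes to zero as $u\to\infty$'') and simply asserts the limiting densities; it does not distinguish weak convergence from the \emph{local} limit theorem that the statement actually requires. Your $O(1/u^2)$ pointwise refinement of the without-replacement correction and your characteristic-function argument with the Dirichlet-kernel majorant are exactly the missing ingredients, and your observation about $n=1$ (discontinuity of $f_{\mathrm{IH},1}$, hence failure of uniformity at $x_\ell=\pm 1$) is a genuine caveat the paper overlooks.
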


The factor $u$ in the definition of $t_{u,n}$ ensures normalization in the sense that
\begin{equation}
\begin{split}
&\sum_{m_s} \int_{[-n,n]} t_{u,n}(x_\ell,m_s) \ud x_\ell
= \dim\left(\wedge^n V_u\right)^{-1} \sum_{m_s} \int_{[-n\,u,n\,u]} T(\round{m_\ell},m_s) \ud m_\ell\\
&\qquad\cong \dim\left(\wedge^n V_u\right)^{-1} \sum_{m_\ell,m_s} T(m_\ell,m_s) = 1.
\end{split}
\end{equation}

\begin{proof}
First label the basis vectors (``spherical harmonics'') spanning $V_u$ abstractly as 
\begin{equation}
Y_u := \left\{u\!\uparrow,u\!\downarrow,\dots,(-u)\!\uparrow,(-u)\!\downarrow\right\}.
\end{equation}
Now let $\psi = \ket{\varphi_1,\dots,\varphi_n} \in \wedge^n V_u$ be a uniformly random Slater determinant, with $\varphi_i \in Y_u$ pairwise different. In other words, $\psi$ randomly selects $n$ distinct elements from $Y_u$. As already shown in the beginning of section~\ref{sec:Algorithm}, $\psi$ is a simultaneous $L_z$-$S_z$ eigenvector. To estimate the distribution $(\mathrm{eig}_{L_z}\!(\psi),\mathrm{eig}_{S_z}\!(\psi))$, note that $L_z$ and $S_z$ just sum up the corresponding terms in $\psi$. Thus, for example,
\begin{align}
\mathrm{eig}_{L_z}\!(\ket{2\!\uparrow,1\!\downarrow,-1\!\uparrow}) &= 2 + 1 - 1 = 2,\\
\mathrm{eig}_{S_z}\!(\ket{2\!\uparrow,1\!\downarrow,-1\!\uparrow}) &= \tfrac{1}{2} - \tfrac{1}{2} + \tfrac{1}{2} = \tfrac{1}{2}.
\end{align}
Observe that the error incurred by ignoring the exclusion principle goes to zero as $u \to \infty$ due to $n \ll u$. That is, we may replace $\psi$ by $\tilde{\psi} := \tilde{\varphi}_1 \otimes \cdots \otimes\tilde{\varphi}_n \in \bigotimes^n V_u$ with $\tilde{\varphi}_i \in Y_u$ i.i.d.\ (independent and identically distributed). Then $\mathrm{eig}_{L_z}\!(\tilde{\psi})$ and $\mathrm{eig}_{S_z}\!(\tilde{\psi})$ are independent as well and can be handled separately. The distribution $f_{S_z}$ stems directly from $\mathrm{eig}_{S_z}\!(\tilde{\psi}) = \sum_i \mathrm{eig}_{S_z}\!(\tilde{\varphi}_i)$. Considering $\mathrm{eig}_{L_z}\!(\tilde{\psi})$, first note that the discretization error
\begin{equation}
\left\lvert f_{L_z}\!(x_\ell) - f_{L_z}\!\left(\frac{\round{u\,x_\ell}}{u}\right) \right\rvert \to 0
\end{equation}
as $u \to \infty$ since $f_{L_z}$ is uniformly continuous. Thus, the distribution of $\frac{1}{u}\,\mathrm{eig}_{L_z}\!(\tilde{\varphi}_i)$ approaches $U(-1,1)$, and consequently, $\frac{1}{u}\,\mathrm{eig}_{L_z}\!(\tilde{\psi}) = \sum_i \frac{1}{u}\,\mathrm{eig}_{L_z}\!(\tilde{\varphi}_i) \sim f_{L_z}$.
\end{proof}

\subsection{Run time}
\label{sec:RunTime}

This subsection is concerned with the asymptotic run time of the main algorithm, as already stated in the introduction.

\begin{proposition}
For any fixed integer $n \ge 1$, the run time $R_n(u)$ of algorithm~\ref{alg:IrredLS} obeys
\begin{equation}
\label{eq:RunTime}
R_n(u) = \mathcal{O}\big(u^{3n-2}\big)
\end{equation}
as $u \to \infty$.
\end{proposition}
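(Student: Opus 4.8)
The plan is to account for the total work of algorithm~\ref{alg:IrredLS} by multiplying the number of irreducible spaces processed by the cost of processing each one, and to show that the dominant contribution comes from line~\ref{algline:OrthComplement} (the projection updates on the central $L_z$-$S_z$ eigenspace).

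First I would record the relevant dimension estimates. By proposition~\ref{prop:LzSzDist}, the multiplicity $T(m_\ell,m_s)$ of a simultaneous $L_z$-$S_z$ eigenspace is $\mathcal{O}(\dim(\wedge^n V_u)/u) = \mathcal{O}(u^{n-1})$ uniformly in $(m_\ell,m_s)$, and the central multiplicity $d_{u,n}$ attains this order (this is Eq.~\eqref{eq:d_un}, derived in section~\ref{sec:d_un}). The number of irreducible LS spaces is exactly $d_{u,n} = \mathcal{O}(u^{n-1})$, as noted right after algorithm~\ref{alg:quantLS}. The distinct $L_z$ eigenvalues range over $\mathcal{O}(u)$ values and the $S_z$ eigenvalues over $\mathcal{O}(1)$ values (only $n+1$ of them), so there are $\mathcal{O}(u)$ eigenspaces $W_{m_\ell,m_s}$, each of dimension $\mathcal{O}(u^{n-1})$, and each of the $\mathcal{O}(u^{n-1})$ irreducible spaces $V_{u,n,i}$ has dimension $(2\ell_i+1)(2s_i+1) = \mathcal{O}(u)$.

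Next I would bound the cost per iteration $i$. In line~\ref{algline:LoweringSpan}, spanning $V_{u,n,i}$ requires $\mathcal{O}(u)$ applications of the lowering operators; since $L_-$ and $S_-$ are sparse — each basis Slater determinant maps to $\mathcal{O}(n) = \mathcal{O}(1)$ others — and vectors live in spaces of dimension $\mathcal{O}(u^{n-1})$ (a vector in $\wedge^n V_u$ restricted to a fixed $W_{m_\ell,m_s}$), each application costs $\mathcal{O}(u^{n-1})$, for $\mathcal{O}(u^n)$ per iteration; summed over $\mathcal{O}(u^{n-1})$ iterations this gives $\mathcal{O}(u^{2n-1})$. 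In line~\ref{algline:OrthComplement}, the rank-one update $P_j \gets P_j - \ket{\psi_i^{\ell_j,s_j}}\bra{\psi_i^{\ell_j,s_j}}$ must be performed on every eigenspace $W_{\ell_j,s_j}$ touched by $V_{u,n,i}$; each such update is an outer-product subtraction on a matrix of size $\mathcal{O}(u^{n-1}) \times \mathcal{O}(u^{n-1})$, costing $\mathcal{O}(u^{2n-2})$, and $V_{u,n,i}$ meets $\mathcal{O}(u)$ distinct eigenspaces (one per $m_\ell$ value in its range, times $\mathcal{O}(1)$ values of $m_s$), so one iteration costs $\mathcal{O}(u^{2n-1})$ here as well. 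Multiplying by the $\mathcal{O}(u^{n-1})$ iterations yields $\mathcal{O}(u^{3n-2})$, which dominates the contribution from line~\ref{algline:LoweringSpan} and matches Eq.~\eqref{eq:RunTime}; I would also note that the preprocessing (algorithm~\ref{alg:quantLS} and the initial partition into the $W_{m_\ell,m_s}$) is cheaper, of order $\mathcal{O}(\dim(\wedge^n V_u)) = \mathcal{O}(u^n)$ up to the table bookkeeping, hence negligible.

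The main obstacle I anticipate is making the per-iteration counting genuinely uniform rather than worst-case-times-count: naively multiplying ``$\mathcal{O}(u^{n-1})$ iterations'' by ``$\mathcal{O}(u)$ eigenspaces each of dimension $\mathcal{O}(u^{n-1})$'' could overcount, so I would argue carefully that the bound $T(m_\ell,m_s) = \mathcal{O}(u^{n-1})$ from proposition~\ref{prop:LzSzDist} holds with a constant independent of $(m_\ell,m_s)$, and that the number of eigenspaces $W_{\ell_j,s_j}$ intersected by a single $V_{u,n,i}$ is $O(\ell_i) = O(u)$ with no hidden dependence on how many irreducible spaces share the same $(\ell_i,s_i)$. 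A secondary subtlety is the symbolic-computation cost model: entries of the matrices are exact algebraic numbers whose bit-size could in principle grow, but the use of projection matrices $P_j$ (as opposed to orthonormalized bases) keeps coefficient growth controlled, so I would state the complexity under the standard assumption that each arithmetic operation on these symbolic entries is $\mathcal{O}(1)$, consistent with the run-time figures reported in section~\ref{sec:Results}.
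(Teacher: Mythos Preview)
Your proposal is correct and reaches the same $\mathcal{O}(u^{3n-2})$ bound, but it organizes the counting differently from the paper. The paper groups the work of line~\ref{algline:OrthComplement} by eigenspace rather than by iteration: writing $(\ell'_k,s'_k)$ for the distinct $(\ell_i,s_i)$ and $d_k := \dim W_{\ell'_k,s'_k}$, it observes that $W_{\ell'_k,s'_k}$ receives exactly $d_k$ rank-one updates over the whole run (one from each irreducible space whose rectangle covers $(\ell'_k,s'_k)$), each costing $\mathcal{O}(d_k^2)$, so the total is $\sum_k d_k^3$; it then estimates this sum via the density from proposition~\ref{prop:LzSzDist} as an integral, which produces the $u^{-2}$ factor and also sets up the explicit constant in the subsequent lemma. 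Your route instead multiplies (number of iterations) $\times$ (eigenspaces touched per iteration) $\times$ (worst-case update cost), using the uniform bound $T(m_\ell,m_s)\le d_{u,n}=\mathcal{O}(u^{n-1})$ throughout. This is more elementary and sidesteps the integral, at the price of losing the constant; note also that the paper's eigenspace-first reorganization is precisely what dissolves the ``overcounting'' worry you flag, since $d_k$ counts the updates to $W_{\ell'_k,s'_k}$ exactly rather than via a worst-case product.
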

\begin{proof}
Due to the sparse matrix structure of the lowering operators $L_-$ and $S_-$, each matrix multiplication in line~\ref{algline:LoweringSpan} of the algorithm has linear (instead of quadratic) cost. Thus, the main computational cost stems from line~\ref{algline:OrthComplement}. Denote the tuples $(\ell_i,s_i)$ after deleting duplicates by $(\ell'_k,s'_k)$. For each simultaneous $L_z$-$S_z$ eigenspace $W_{\ell'_k,s'_k}$ with dimension $d_k := \dim(W_{\ell'_k,s'_k})$, the algorithm calculates $d_k$ orthogonal complements within $W_{\ell'_k,s'_k}$, each of which takes $\mathcal{O}(d_k^2)$ operations. So in total, $R_n(u) = \mathcal{O}(\sum_k d_k^3)$. Combining this result with~\eqref{eq:MultiplicityDist} yields the following upper bound,
\begin{equation}
\label{eq:RunTimeU}
\begin{split}
R_n(u)
&\lesssim \frac14 \dim\left(\wedge^n V_u\right)^3 \int_{[-n\,u,n\,u]} u^{-3}\, f_{L_z}(m_\ell/u)^3 \ud m_\ell \,\sum_{m_s} f_{S_z}(m_s)^3\\
&= \frac14 \dim\left(\wedge^n V_u\right)^3 u^{-2} \int_{\mathbb{R}} f_{\mathrm{IH},n}(x_\ell)^3 \ud x_\ell \sum_{x_s = 0}^n f_{\mathrm{bin},n,\frac{1}{2}}(x_s)^3\\
&= \mathcal{O}\big(u^{3n-2}\big).
\end{split}
\end{equation}
The factor $\frac14$ stems from the observation that for each $k$, neither $W_{\ell'_k,-s'_k}$, $W_{-\ell'_k,s'_k}$ nor $W_{-\ell'_k,-s'_k}$ contribute to the cost. The second line follows from a change of variables, and the third from noting that the integral and sum in the second line do not depend on $u$.
\end{proof}

Taking one step further, we can now investigate the dependency of $R_n(u)$ on $n$ in more detail and evaluate the terms in the second line of~\eqref{eq:RunTimeU}. We obtain the following
\begin{lemma}
Assume that $n$ is large enough such that $f_{L_z}$ and $f_{S_z}$ can be well approximated by Gaussian normal distributions with mean $0$ and variances $\sigma_\ell$ and $\sigma_s$ from proposition~\ref{prop:LzSzDist}. Then
\begin{equation}
R_n(u) \lesssim \frac{\dim\left(\wedge^n V_u\right)^3}{48 \pi^2\,u^2\,\sigma_\ell^2\,\sigma_s^2} = \frac{\dim\left(\wedge^n V_u\right)^3}{(2 \pi\,n\,u)^2}.
\end{equation}
\end{lemma}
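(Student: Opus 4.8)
The plan is to start from the second line of the run-time bound~\eqref{eq:RunTimeU}, namely
\[
R_n(u) \lesssim \frac14 \dim\left(\wedge^n V_u\right)^3 u^{-2} \left(\int_{\mathbb{R}} f_{\mathrm{IH},n}(x_\ell)^3 \ud x_\ell\right) \left(\sum_{x_s=0}^n f_{\mathrm{bin},n,\frac12}(x_s)^3\right),
\]
and to estimate the two $n$-dependent factors $I_\ell := \int_{\mathbb{R}} f_{\mathrm{IH},n}(x_\ell)^3 \ud x_\ell$ and $I_s := \sum_{x_s=0}^n f_{\mathrm{bin},n,\frac12}(x_s)^3$ by replacing each density with the Gaussian of matching mean and variance. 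Under the stated hypothesis that $n$ is large enough for this approximation to be accurate, I would write $f_{\mathrm{IH},n}(x) \approx g_{\sigma_\ell}(x - n/2)$ where $g_\sigma(y) = (2\pi\sigma^2)^{-1/2}\exp(-y^2/2\sigma^2)$ — note that $f_{\mathrm{IH},n}$ has mean $n/2$ and, by proposition~\ref{prop:LzSzDist} together with the relation $f_{L_z}(x_\ell) = \tfrac12 f_{\mathrm{IH},n}(x_\ell/2 + n/2)$, variance $\sigma_\ell^2/4 \cdot 4$; one has to track the change-of-variables factor carefully so that the variance feeding into $I_\ell$ is the one for which $\int g_\sigma^3 = (2\sqrt3\,\sigma\sqrt\pi)^{-1}\cdot\text{const}$ comes out right. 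Concretely $\int_{\mathbb R} g_\sigma(y)^3 \ud y = \frac{1}{2\sqrt{3}\,\sigma\,\sqrt{\pi}}$ after the standard Gaussian-cube integral $\int e^{-3y^2/2\sigma^2}\ud y = \sigma\sqrt{2\pi/3}$. Similarly, treating the binomial sum as a Riemann sum for $\int g_{\sigma_s}^2$ — wait, for the cube — gives $I_s \approx \frac{1}{2\sqrt3\,\sigma_s\sqrt\pi}$ by the same formula.

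The key steps in order would be: (1) recall from proposition~\ref{prop:LzSzDist} that $\sigma_\ell^2 = n/3$ and $\sigma_s^2 = n/4$; (2) express $I_\ell$ in terms of the variance that actually appears when $f_{L_z}$ is written as a rescaled Irwin--Hall density, so that $I_\ell = \int f_{\mathrm{IH},n}^3 \approx \bigl(2\sqrt3\,\sqrt\pi\cdot(\text{std.\ dev.\ of }f_{\mathrm{IH},n})\bigr)^{-1}$, and since the std.\ dev.\ of $f_{\mathrm{IH},n}$ is $2\sigma_\ell$ (because $f_{L_z}$ with variance $\sigma_\ell^2$ is $f_{\mathrm{IH},n}$ horizontally scaled by $2$), this gives $I_\ell \approx (4\sqrt3\,\sqrt\pi\,\sigma_\ell)^{-1}$; (3) approximate the discrete sum $I_s$ by the corresponding Gaussian-cube integral, yielding $I_s \approx (2\sqrt3\,\sqrt\pi\,\sigma_s)^{-1}$; (4) multiply everything together: $\tfrac14 \cdot (4\sqrt3\sqrt\pi\sigma_\ell)^{-1} \cdot (2\sqrt3\sqrt\pi\sigma_s)^{-1} = (32\pi\cdot 3\,\sigma_\ell\sigma_s)^{-1}$? — here I will need to be careful to land on the stated constant $1/(48\pi^2 u^2 \sigma_\ell^2\sigma_s^2)$, which suggests the two Gaussian-cube factors each contribute a $\sigma^{-2}$ rather than $\sigma^{-1}$, so the correct reading is that one of the integrals is over a two-dimensional-in-disguise object or that I have miscounted a scaling factor; the honest approach is to carry the normalization constants symbolically through steps (2)–(3) without guessing; (5) substitute $\sigma_\ell^2 = n/3$, $\sigma_s^2 = n/4$ to collapse $48\pi^2 \sigma_\ell^2 \sigma_s^2 = 48\pi^2 \cdot (n/3)(n/4) = 4\pi^2 n^2$, giving the final form $\dim(\wedge^n V_u)^3/(2\pi n u)^2$.

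The main obstacle I anticipate is bookkeeping of the change-of-variables and discretization factors: the relation between $f_{L_z}$, $f_{\mathrm{IH},n}$, and the Gaussian approximation involves a factor-of-two horizontal rescaling (from $x_\ell \mapsto x_\ell/2 + n/2$), and the run-time bound~\eqref{eq:RunTimeU} has already absorbed a change of variables $m_\ell = u x_\ell$ plus a $u^{-3}$; getting the power of $\sigma_\ell$ in $\int f_{\mathrm{IH},n}^3$ correct (it should ultimately produce $\sigma_\ell^2$ in the denominator once combined with the $\tfrac14$ and the rescaling Jacobians, matching $48\pi^2 u^2\sigma_\ell^2\sigma_s^2$) requires tracking these consistently rather than quoting the one-dimensional Gaussian-cube formula in isolation. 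Once the constants are pinned down, the substitution $\sigma_\ell^2\sigma_s^2 = n^2/12$ and the arithmetic $48\pi^2/12 = 4\pi^2 = (2\pi)^2$ is routine and yields the compact right-hand side.
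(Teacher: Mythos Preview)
Your overall plan is exactly what the paper does (the paper offers no proof beyond the sentence ``evaluate the terms in the second line of~\eqref{eq:RunTimeU}''), so the strategy is correct. However, the execution contains a concrete error that you yourself flag but do not resolve: the cube integral of a Gaussian density is
\[
\int_{\mathbb{R}} g_\sigma(y)^3\,\ud y
= \frac{1}{(2\pi)^{3/2}\sigma^3}\int_{\mathbb{R}} e^{-3y^2/(2\sigma^2)}\,\ud y
= \frac{1}{(2\pi)^{3/2}\sigma^3}\cdot\sigma\sqrt{\tfrac{2\pi}{3}}
= \frac{1}{2\sqrt{3}\,\pi\,\sigma^{2}},
\]
not $\frac{1}{2\sqrt{3}\,\sigma\sqrt{\pi}}$. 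This is precisely the $\sigma^{-2}$ scaling you suspected was needed, and it comes directly from the calculation --- no ``two-dimensional-in-disguise'' reinterpretation is required. A second slip: the standard deviation of $f_{\mathrm{IH},n}$ is $\sigma_\ell/2$, not $2\sigma_\ell$ (if $X\sim f_{L_z}$ with variance $\sigma_\ell^2$, then $X/2+n/2\sim f_{\mathrm{IH},n}$ has variance $\sigma_\ell^2/4 = n/12$, the well-known Irwin--Hall variance).

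With these two fixes, step~(4) becomes purely mechanical: insert $\int f_{\mathrm{IH},n}^3 \approx \frac{1}{2\sqrt3\,\pi\,(\sigma_\ell/2)^2}$ and $\sum f_{\mathrm{bin},n,1/2}^3 \approx \frac{1}{2\sqrt3\,\pi\,\sigma_s^2}$ into the second line of~\eqref{eq:RunTimeU}, multiply out, and then substitute $\sigma_\ell^2\sigma_s^2 = n^2/12$ to obtain the compact form. Your step~(5) arithmetic $48\pi^2\cdot n^2/12 = (2\pi n)^2$ is already correct.
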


\subsection{Dimension of the central $L_z$-$S_z$ eigenspace}
\label{sec:d_un}

Let $d_{u,n}$ label the maximum dimension of any simultaneous $L_z$-$S_z$ eigenspace on $\wedge^n V_u$, which is attained by the ``central'' eigenspace with eigenvalues $(m_\ell,m_s) = (0,0)$ for $n$ even and $(0,\frac{1}{2})$ for $n$ odd, respectively. Thus, $d_{u,n}$ can be approximated by evaluating the right side of equation~\eqref{eq:MultiplicityDist} at these eigenvalues. A comparison with the exact $d_{u,n}$ is shown in figure~\ref{fig:d_un}, which nicely illustrates the polynomial scaling in $u$. As a remark, $f_{\mathrm{IH},n}(\frac{n}{2}) = \frac{2}{\pi} \int_0^\infty \mathrm{sinc}(x)^n \ud x$ due to the convolution theorem applied to the uniform probability density function on the interval $[-1/2,1/2]$.

\begin{figure}[!ht]
\centering
\includegraphics[width=0.8\textwidth]{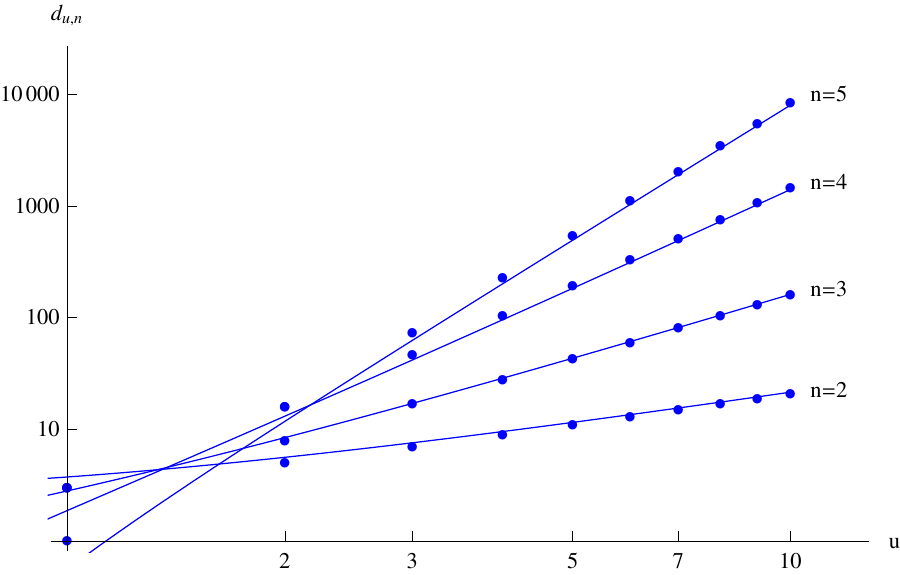}
\caption{Log-log plot of $d_{u,n}$ versus $u$ for various $n$. Dots are exact values, and lines show the right side of equation~\eqref{eq:MultiplicityDist} evaluated at $(m_\ell,m_s) = (0,0)$ for $n$ even and $(0,\frac{1}{2})$ for $n$ odd, respectively.}
\label{fig:d_un}
\end{figure}

To derive equation~\eqref{eq:d_un}, we follow the same procedure as above and replace $f_{L_z}$ and $f_{S_z}$ by Gaussian normal distributions. We then set $m_s = 0$ both for $n$ even and $n$ odd since $\frac{1}{2}$ is small compared to $n$. Plugging in $(m_\ell,m_s) = (0,0)$ yields
\begin{lemma}
Assume that $n$ is large enough such that $f_{L_z}$ and $f_{S_z}$ can well be approximated by Gaussian normal distributions. Then
\begin{equation}
d_{u,n} \cong \frac{\dim\left(\wedge^n V_u\right)}{2\pi\,u\,\sigma_\ell\,\sigma_s} = \sqrt3\ \frac{\dim\left(\wedge^n V_u\right)}{\pi\,n\,u}.
\end{equation}
\end{lemma}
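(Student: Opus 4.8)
The plan is to read $d_{u,n}$ off directly from Proposition~\ref{prop:LzSzDist} and then insert the Gaussian approximations of $f_{L_z}$ and $f_{S_z}$. Recall that $d_{u,n}$ is by definition the multiplicity of the central simultaneous $L_z$-$S_z$ eigenspace, i.e.\ $T(0,0)$ for $n$ even and $T(0,\tfrac12)$ for $n$ odd. Setting $x_\ell = 0$ in $t_{u,n}$ (so that $\round{u\,x_\ell} = 0$) gives, for the central $m_s$-value $m_s^{\mathrm c}\in\{0,\tfrac12\}$,
\[
d_{u,n} = \frac{\dim(\wedge^n V_u)}{u}\,t_{u,n}(0,m_s^{\mathrm c}),
\]
and~\eqref{eq:MultiplicityDist} makes the factor $t_{u,n}(0,m_s^{\mathrm c})$ converge to $f_{L_z}(0)\,f_{S_z}(m_s^{\mathrm c})$ as $u\to\infty$. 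For odd $n$ I would replace $f_{S_z}(\tfrac12)$ by $f_{S_z}(0)$, since $\tfrac12\ll n$ makes $f_{S_z}(\tfrac12)/f_{S_z}(0) = 1 + \mathcal{O}(1/n)$ negligible within the ``$n$ large'' regime assumed in the lemma.

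Next I would invoke the hypothesis that $f_{L_z}$ and $f_{S_z}$ are well approximated by centered Gaussians with the variances $\sigma_\ell^2 = n/3$ and $\sigma_s^2 = n/4$ already obtained in Proposition~\ref{prop:LzSzDist} --- this is just the central limit theorem for the Irwin--Hall sum and the de Moivre--Laplace theorem for the binomial. Evaluating a centered Gaussian density at the origin yields $f_{L_z}(0) \approx (\sqrt{2\pi}\,\sigma_\ell)^{-1}$ and $f_{S_z}(0) \approx (\sqrt{2\pi}\,\sigma_s)^{-1}$, hence
\[
d_{u,n} \cong \frac{\dim(\wedge^n V_u)}{u}\cdot\frac{1}{\sqrt{2\pi}\,\sigma_\ell}\cdot\frac{1}{\sqrt{2\pi}\,\sigma_s} = \frac{\dim(\wedge^n V_u)}{2\pi\,u\,\sigma_\ell\,\sigma_s}.
\]
Substituting $\sigma_\ell\,\sigma_s = \sqrt{n/3}\,\sqrt{n/4} = n/(2\sqrt3)$ converts this into $\sqrt3\,\dim(\wedge^n V_u)/(\pi\,n\,u)$, which is the asserted identity.

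I expect the only genuine subtlety to be bookkeeping about the meaning of ``$\cong$'': two approximations are layered here --- the \emph{exact} $u\to\infty$ limit of Proposition~\ref{prop:LzSzDist}, which is uniform and causes no trouble, and the ``$n$ large'' Gaussian replacement of $f_{L_z}$ and $f_{S_z}$, which is heuristic and is precisely what the lemma's hypothesis postulates. The half-integer shift $m_s^{\mathrm c} = \tfrac12$ for odd $n$ is the one point not already covered by Proposition~\ref{prop:LzSzDist}, but it only contributes a multiplicative $e^{-1/(2n)}$ correction that is absorbed into that same Gaussian error; everything else is substitution of the variances computed earlier.
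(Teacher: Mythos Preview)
Your argument is correct and mirrors the paper's own derivation: read off $d_{u,n}$ from Proposition~\ref{prop:LzSzDist}, replace $f_{L_z}$ and $f_{S_z}$ by centered Gaussians with the stated variances, set $m_s=0$ (absorbing the $\tfrac12$ shift for odd $n$ into the large-$n$ error), and evaluate at the origin. The paper's proof is essentially the one-line remark ``plug in $(m_\ell,m_s)=(0,0)$ after the Gaussian replacement'', so your version is in fact more explicit than the original.
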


\section{Conclusions}
\label{sec:Conclusions}

The main principle of the algorithm is the \emph{implicit} simultaneous diagonalization of the many-particle angular momentum, spin and parity operators by algebraic traversal of the $L_z$-$S_z$ eigenstates in the correct order. This involves $\mathcal{O}(u^{3n-2})$ operations for angular subshell $u$ filled with $n$ electrons. When taking any admissible $n$ into account, subshells up to $u = \mathrm{d}$ are feasible at present, and $u = \mathrm{f}$ for all $n = 1, \dots, 14$ might still be attainable. Notably, the electronic ground state configurations found in the periodic table are precisely constructed from the atomic $\mathrm{s}$, $\mathrm{p}$, $\mathrm{d}$, $\mathrm{f}$ subshells.

\paragraph{Acknowledgments.} I would like to thank Gero Friesecke for many helpul discussions, and DFG for financial support under project FR 1275/3-1.


\end{document}